\newtheorem{theorem}{Theorem}
\newtheorem{lemma}[theorem]{Lemma}
\newtheorem{corollary}[theorem]{Corollary}
\begin{document}
\sloppy

\title{Symmetric Two-User Gaussian Interference Channel with Common Messages}

\author{
  \IEEEauthorblockN{Quan Geng}
  \IEEEauthorblockA{CSL and Dept. of ECE\\
    UIUC, IL 61801\\
    Email: geng5@illinois.edu}
  \and
  \IEEEauthorblockN{Tie Liu}
  \IEEEauthorblockA{Dept. of Electrical and Computer Engineering\\
	Texas A\&M University, TX 77843 \\
    Email: tieliu@tamu.edu}
}

\maketitle

\begin{abstract}
We consider symmetric two-user Gaussian interference channel with common messages. We derive an upper bound on the sum capacity, and show that the upper bound is tight in the low interference regime, where the optimal transmission scheme is to send no common messages and each receiver treats interference as noise. Our result shows that although the availability of common messages provides a  cooperation opportunity for transmitters, in the low interference regime the presence of common messages does not help increase the sum capacity.

\end{abstract}

\section{Introduction} \label{sec:intro}

Interference channel is a fundamentally important communication model in information theory \cite{Cover12}. While the exact capacity region of interference channel in the simplest setting with two transmitter-receiver pairs is still unknown in general, recent research efforts have significantly improve our understanding of the capacity region. In particular, \cite{Etkin08} characterizes the capacity region of two-user Gaussian interference channel within one bit.  The exact capacity region has also been derived in certain regimes, e.g. the strong interference regime \cite{Sato81}, and the low interference regime \cite{Sreekanth09, Shang09, Motahari09}, which show that in the low interference regime treating interference as noise is optimal and achieves the sum capacity.

In this paper, we consider the symmetric two-user Gaussian interference channel with common messages, where each transmitter wants to send a private message to its corresponding receiver and both transmitters also intend to send a common message to both receivers. We derive an upper bound on the sum capacity using a genie-aided method \cite{Etkin08, Sreekanth09}, and show that the upper bound is tight in the low interference regime, where the optimal transmission scheme is to send no common messages and each receiver treats interference as noise. Our result shows that although the availability of common messages provides a cooperation opportunity for transmitters, in the low interference regime the presence of common messages does not help increase the sum capacity.

\subsection{Organization}

This paper is organized as follows. We describe the channel model in Section \ref{sec:model}, and derive an upper bound on the sum capacity in Section \ref{sec:upperbound}. In Section \ref{sec:tightness}, we give a natural lower bound on the sum capacity, and show that the upper bound matches the lower bound in certain low interference regime. In Section \ref{sec:zerocommon}, we prove that in the low interference regime the availability of the common messages does not help increase the sum capacity and thus treating interference as noise is optimal. Section \ref{sec:conclusion} concludes this paper.

\section{System Model} \label{sec:model}
We consider a symmetric two-user Gaussian interference channel with common messages. The channel input-output relation is given by
\begin{eqnarray}
Y_1 &=& X_1+cX_2+Z_1\\
Y_2 &=& X_2+cX_2+Z_2
\end{eqnarray}
where $X_i$ is the signal sent by the $i$th transmitter, and $Y_i$ is the signal received by the $i$th receiver, $Z_k$ is $\mathcal{N}(0,1)$ for $k=1,2$, and $E[X_k^2] \leq P$ for $k=1,2$. Without loss of generality, we assume that $c \geq 0$.

There are a set of three independent messages $(W_0,W_1,W_2)$, where $W_0$ is available at both transmitters and intended for both receivers, $W_1$ is available at transmitter 1 only and intended for receiver 1 only, and $W_2$ is available at transmitter 2 only and intended for receiver 2 only. We use $R_i$ to denote the transmission rate for messages $W_i$, for $i=0,1,2.$

\begin{figure}[t]
\centering
\includegraphics[scale=0.7]{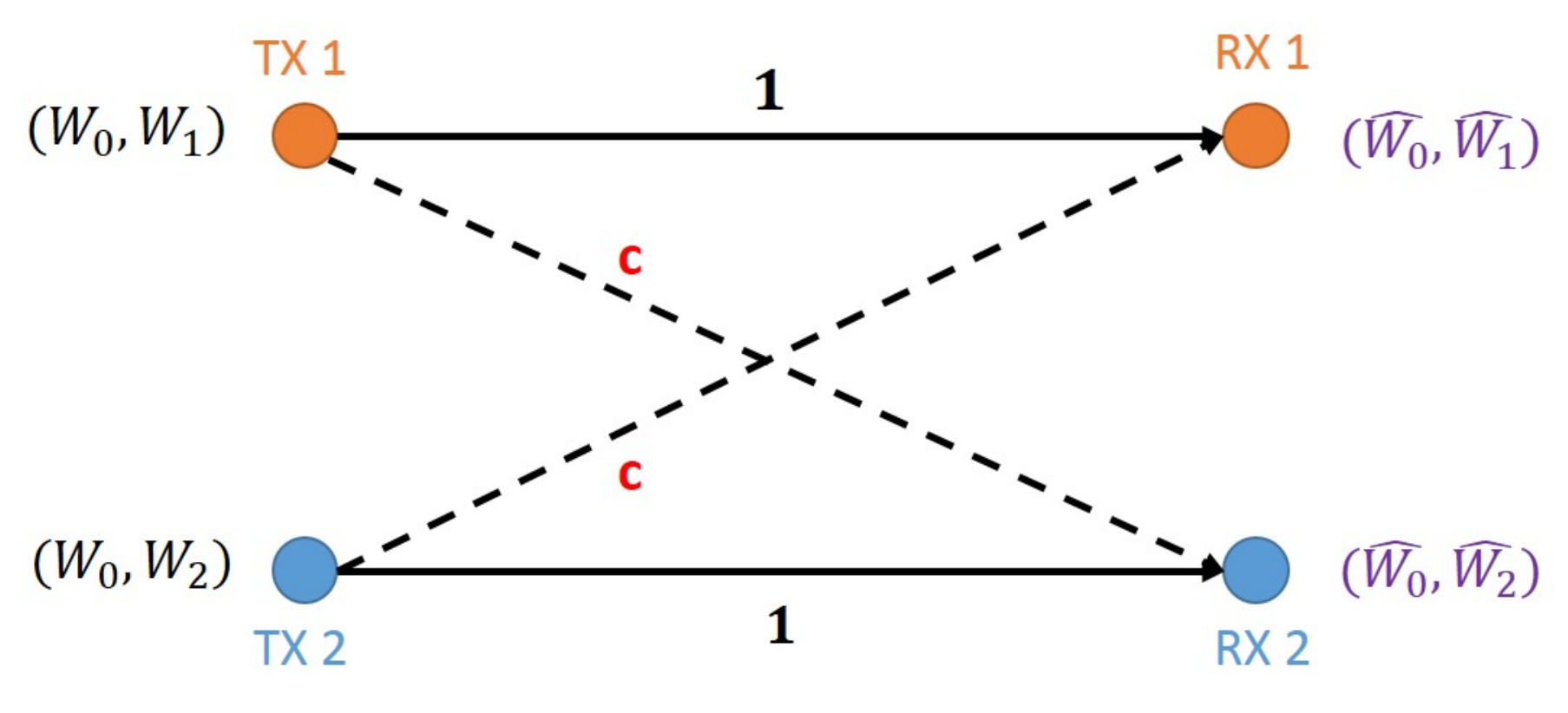}
\caption{Channel Model}
\label{fig:model}
\end{figure}

\section{Upper Bound for the Sum Capacity}\label{sec:upperbound}
In this section, we use the genie-added technique to derive an upper bound on the sum rates.

Our main result is given in Theorem \ref{thm:upper}.

\begin{theorem}\label{thm:upper}
The sum rate $R_0+R_1+R_2$ can be bounded from above as
\begin{align}
 R_0+R_1+R_2 \le \max_{0 \leq P_1 = P_2 \leq P} g(P_1,P_2),  \label{eqn:upper}
\end{align}
where
$g(P_1,P_2)$ is defined as
\begin{align}
 g(P_1,P_2) \triangleq \min_{ (a_1^2,a_2^2,\mathrm{Var}(\tilde{Z}_1),\mathrm{Var}(\tilde{Z}_2)) \in \mathcal{A}(P_1,P_2) } f\left(P_1,P_2,a_1^2,\mathrm{Var}(\tilde{Z}_1),a_2^2,\mathrm{Var}(\tilde{Z}_2)\right),
\end{align}
and
\begin{eqnarray}
\mathcal{A}(P_1,P_2) &:=&
\left\{\left.\left(a_1^2,a_2^2,\mathrm{Var}(\tilde{Z}_1),\mathrm{Var}(\tilde{Z}_2)\right)\right|\right.\nonumber\\
&& \left.\begin{array}{r}
  0 \leq \mathrm{Var}(\tilde{Z}_1) \leq 1-a_2^2\\
  \sqrt{\left(1-a_2^2-\mathrm{Var}(\tilde{Z}_1)\right)\left(1-\mathrm{Var}(\tilde{Z}_1)\right)}-\mathrm{Var}(\tilde{Z}_1) \geq
  c^2P_1\\
  0 \leq \mathrm{Var}(\tilde{Z}_2) \leq 1-a_1^2\\
  \sqrt{\left(1-a_1^2-\mathrm{Var}(\tilde{Z}_2)\right)\left(1-\mathrm{Var}(\tilde{Z}_2)\right)}-\mathrm{Var}(\tilde{Z}_2) \geq
  c^2P_2
\end{array}
\right\}
\end{eqnarray}
and
\begin{eqnarray}
&&
f\left(P_1,P_2,a_1^2,\mathrm{Var}(\tilde{Z}_1),a_2^2,\mathrm{Var}(\tilde{Z}_2)\right)\nonumber\\
&& \hspace{20pt} :=
\;\frac{1}{4}\left[\log\frac{\left(P+c^2P+2c\sqrt{(P-P_1)(P-P_2)}+1\right)^2}{(c^2P_1+1)(c^2P_2+1)}+\right.\nonumber\\
&& \hspace{35pt}
\log\frac{(P_1+c^2P_2+1)(c^2P_1+\mathrm{Var}(\tilde{Z}_1))-
\left(cP_1+a_1\sqrt{\mathrm{Var}(\tilde{Z}_1)}\right)^2
}{(c^2P_1+1-a_2^2)\mathrm{Var}(\tilde{Z}_1)}+\nonumber\\
&& \hspace{35pt}\left.
\log\frac{(P_2+c^2P_1+1)(c^2P_2+\mathrm{Var}(\tilde{Z}_2))-
\left(cP_2+a_2\sqrt{\mathrm{Var}(\tilde{Z}_2)}\right)^2
}{(c^2P_2+1-a_1^2)\mathrm{Var}(\tilde{Z}_2)}\right]
\end{eqnarray}
\end{theorem}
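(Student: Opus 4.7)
The plan is to combine a Fano-based decomposition with the genie-aided converse technique of Etkin-Tse-Wang and Annapureddy-Veeravalli, adapted to the presence of the common message $W_0$. I begin by writing each input as $X_i = X_{i,0}+X_{i,p}$, where $X_{i,0}$ encodes $W_0$ and $X_{i,p}$ encodes $W_i$, with $\mathrm{Var}(X_{i,p})=P_i\le P$. Since only the common parts are shared across transmitters, Cauchy-Schwarz gives $|E[X_1X_2]|=|E[X_{1,0}X_{2,0}]|\le \sqrt{(P-P_1)(P-P_2)}$, which accounts for the cross term in the first logarithm of $f$. A symmetrization argument (time-sharing with a user swap) combined with the symmetry of the channel allows one to assume the worst-case split satisfies $P_1=P_2$, explaining the outer maximization.

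The key idea is to derive four single-letter bounds whose sum gives a bound on $2(R_0+R_1+R_2)$, producing the $\tfrac{1}{4}$ prefactor in $f$. The first pair uses Fano on the two receivers and treats the interfering private signal as Gaussian noise:
\[
n(R_0+R_j)\le I(W_0,W_j;Y_j^n)+n\epsilon_n \le \tfrac{n}{2}\log\frac{P+c^2P+2c\sqrt{(P-P_1)(P-P_2)}+1}{c^2P_{3-j}+1}+n\epsilon_n,
\]
for $j=1,2$; their product contributes the first (squared) logarithm of $f$. The second pair uses a genie: to receiver $j$ one reveals $S_j^n=cX_{3-j}^n+\tilde Z_j^n$, where $\tilde Z_j$ is jointly Gaussian with $Z_j$ parametrized by $\mathrm{Var}(\tilde Z_j)$ and a correlation parameter $a_j$. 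Using the weaker Fano bound $nR_j\le I(W_j;Y_j^n,S_j^n\mid W_0)+n\epsilon_n$ together with a Gaussian-is-extremal argument (worst-additive-noise lemma / conditional EPI) yields the second and third logarithms of $f$, whose numerators are the $2\times 2$ Gaussian covariance determinants of $(Y_j,S_j)$ conditioned on $W_0$. Summing all four bounds and dividing by $2$ produces $f$.

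The constraint set $\mathcal{A}(P_1,P_2)$ then reflects the admissibility of the genie. The variance inequalities $0\le \mathrm{Var}(\tilde Z_j)\le 1-a_{3-j}^2$ are precisely positive-semidefiniteness conditions on the joint covariance of $(Z_j,\tilde Z_j)$ and the auxiliary variables used to couple $\tilde Z_j$ with the interfering signal. The more delicate inequalities $\sqrt{(1-a_{3-j}^2-\mathrm{Var}(\tilde Z_j))(1-\mathrm{Var}(\tilde Z_j))}-\mathrm{Var}(\tilde Z_j)\ge c^2P_j$ are the Annapureddy-Veeravalli ``useful genie'' conditions: they guarantee that a specific Schur complement is non-negative, which is exactly what allows one to replace the conditional entropies appearing in the genie bound by their Gaussian values; without these conditions the Gaussian-extremal argument fails.

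The main obstacle will be the Gaussian-extremality step in the presence of the common message. Because $W_0$ couples the two transmitters through $(X_{1,0},X_{2,0})$, the conditional entropy $h(Y_j^n,S_j^n\mid W_0)$ does not factor as cleanly as in the private-message-only Etkin-Tse-Wang setting, and one must be careful that $h(Y_j^n\mid W_0,W_j)$ and $h(S_j^n\mid W_0,W_j)$ are combined so that the common codeword $X_{3-j,0}^n$ is correctly subtracted. I would address this by conditioning throughout on $W_0$, tracking the full conditional covariance of $(X_{1,0},X_{2,0},X_{1,p},X_{2,p})$, and invoking a conditional worst-additive-noise argument under the useful-genie constraints to conclude that jointly Gaussian codewords with the prescribed cross-correlation saturate the bound; symmetrizing in the two users then collapses the expression to the claimed $f$.
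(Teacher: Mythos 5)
Your high-level framework — Fano plus a genie-aided converse, a Gaussian-extremality step controlled by a ``useful genie'' inequality, a Cauchy--Schwarz bound on the cross-correlation induced by $W_0$, and a user-swap symmetrization to force $P_1=P_2$ — is exactly the paper's plan. However, there are two concrete gaps that prevent your outline from producing the stated $f$.

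First, your genie is the wrong one. You give receiver $j$ the signal $S_j^n=cX_{3-j}^n+\tilde Z_j^n$, i.e.\ a noisy version of the \emph{interference it receives}. The paper instead uses $U_j^n=cX_j^n+\tilde Z_j^n$, a noisy version of receiver $j$'s \emph{own} transmit signal as it appears as interference at the other receiver (the Annapureddy--Veeravalli/Shang--Kramer--Chen choice, not the Etkin--Tse--Wang choice). This distinction is not cosmetic. You yourself describe the numerators in $f$ as $2\times 2$ covariance determinants of $(Y_j,S_j)$ given $W_0$; checking, $\mathrm{Var}(U_1\mid W_0)=c^2P_1+\mathrm{Var}(\tilde Z_1)$ and $\mathrm{Cov}(Y_1,U_1\mid W_0)=cP_1+a_1\sqrt{\mathrm{Var}(\tilde Z_1)}$ match the stated $f$ only when the genie is $cX_1+\tilde Z_1$; with $cX_2+\tilde Z_1$ one gets $c^2P_2+\mathrm{Var}(\tilde Z_1)$ and $c^2P_2+a_1\sqrt{\mathrm{Var}(\tilde Z_1)}$ instead. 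Moreover, with your choice $I(X_j^n;S_j^n\mid W_0)=0$ (conditional independence of $X_1,X_2$ given $W_0$), so the genie delivers no leverage, and the entropy triple $h(cX_1^n+\tilde Z_1^n\mid W_0)-h(cX_1^n+N_2^n\mid W_0)-h(cX_1^n+Z_2^n\mid W_0)$, on which the conditional EPI / ``useful genie'' argument operates, never appears. The constraints in $\mathcal{A}(P_1,P_2)$ — the useful-genie inequalities — are precisely conditions for that EPI-based monotonicity step, and the estimate $\mathrm{Var}(\tilde Z_1)\le\mathrm{Var}(N_2)=1-a_2^2$ couples $\tilde Z_1$ to the noise at the \emph{other} receiver, consistent only with the paper's genie.

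Second, your first pair of bounds has a sign problem. You propose $n(R_0+R_j)\le I(W_0,W_j;Y_j^n)+n\epsilon_n$ and then ``treat the interfering private signal as Gaussian noise.'' But $I(W_0,W_j;Y_j^n)=h(Y_j^n)-h(Y_j^n\mid W_0,W_j)$ requires a \emph{lower} bound on the conditional entropy $h(Y_j^n\mid W_0,W_j)=h(cX_{3-j}^n+Z_j^n\mid W_0)$; Gaussianity only yields an \emph{upper} bound on entropy. The paper sidesteps this by writing $I(W_0;Y_k^n)=h(Y_k^n)-I(X_k^n;Y_k^n\mid W_0)-h(Y_k^n\mid X_k^n,W_0)$, summing with $n(R_k-\epsilon/3)\le I(X_k^n;Y_k^n\mid W_0)$ so the mutual-information terms survive with a coefficient that lets the genie be inserted \emph{before} any Gaussian substitution, and then cancelling $h(Y_k^n\mid X_k^n,W_0)$ against $h(Y_k^n\mid X_k^n,U_k^n,W_0)$ via the conditional EPI. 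Without that rearrangement, your first-pair bound does not follow from the claimed ``Gaussian noise'' step.

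One smaller point: the paper never assumes an a priori split $X_i=X_{i,0}+X_{i,p}$; it defines $P_k:=\tfrac{1}{n}\sum_i\mathrm{Var}(X_{ki}\mid W_0)$ directly from the (arbitrary) code and derives the Cauchy--Schwarz cross-term bound from conditional independence of $X_1,X_2$ given $W_0$. Your decomposition can be made to coincide with this by reading $X_{i,0}:=\mathbb{E}[X_i\mid W_0]$ and $X_{i,p}:=X_i-\mathbb{E}[X_i\mid W_0]$, but as written it presupposes a structure on the encoder that a converse must not impose.
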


\begin{IEEEproof}
We first prove that
\begin{align}
  R_0+R_1+R_2 \le \max_{0 \leq P_1, P_2 \le P} g(P_1,P_2).
\end{align}

 From Fano's inequality, we have for $k=1,2$, any
$\epsilon>0$, and sufficiently large $n$
\begin{eqnarray}
n\left(R_0-\epsilon/3\right) & \leq & I(W_0;Y_k^n)\\
& = & h(Y_k^n)-h(Y_k^n|W_0)\\
& = & h(Y_k^n)-\left(h(Y_k^n|W_0)-h(Y_k^n|X_k^n,W_0)\right)-h(Y_k^n|X_k^n,W_0)\\
& = & h(Y_k^n)-I(X_k^n;Y_k^n|W_0)-h(Y_k^n|X_k^n,W_0)
\end{eqnarray}
Also have from Fano's inequality, we have for $k=1,2$, any
$\epsilon>0$, and sufficiently large $n$
\begin{eqnarray}
n\left(R_k-\epsilon/3\right) & \leq & I(W_k;Y_k^n)\\
& \leq & I(W_k;Y_k^n,W_0)\\
& = & I(W_k;Y_k^n|W_0) \label{eq:T1}\\
& \leq & I(X_k^n;Y_k^n|W_0)\label{eq:T2}
\end{eqnarray}
where \eqref{eq:T1} follows from the independence between $W_k$ and
$W_0$, and \eqref{eq:T2} follows from the fact that given $W_0$,
$W_k-X_k^n-Y_k^n$ forms a Markov chain. The sum rate
\begin{eqnarray}
&& n(R_0+R_1+R_2-\epsilon)\nonumber\\
&& \hspace{20pt} = \;
\frac{n}{2}(R_0-\epsilon/3)+\frac{n}{2}(R_0-\epsilon/3)+n(R_1-\epsilon/3)+n(R_2-\epsilon/3)\\
&& \hspace{20pt} \leq \;
\frac{1}{2}\left(h(Y_1^n)-I(X_1^n;Y_1^n|W_0)-h(Y_1^n|X_1^n,W_0)\right)+\nonumber\\
&& \hspace{35pt} \frac{1}{2}\left(h(Y_2^n)-I(X_2^n;Y_2^n|W_0)-h(Y_2^n|X_2^n,W_0)\right)+\nonumber\\
&& \hspace{35pt} I(X_1^n;Y_1^n|W_0)+I(X_2^n;Y_2^n|W_0)\\
&& \hspace{20pt} = \;
\frac{1}{2}\left(h(Y_1^n)+I(X_1^n;Y_1^n|W_0)-h(Y_1^n|X_1^n,W_0)\right.+\nonumber\\
&& \hspace{35pt}
\left.h(Y_2^n)+I(X_2^n;Y_2^n|W_0)-h(Y_2^n|X_2^n,W_0)\right)\\
&& \hspace{20pt} \leq \; \frac{1}{2}\left(h(Y_1^n)+I(X_1^n;Y_1^n,U_1^n|W_0)-h(Y_1^n|X_1^n,W_0)+\right.\nonumber\\
&& \hspace{35pt} \left.h(Y_2^n)+I(X_2^n;Y_2^n,U_2^n|W_0)-h(Y_2^n|X_2^n,W_0)\right)  \label{eqn:temp5}\\
&& \hspace{20pt} =\; \frac{1}{2}\left(h(Y_1^n)+I(X_1^n;U_1^n|W_0)+I(X_1^n;Y_1^n|U_1^n,W_0)-h(Y_1^n|X_1^n,W_0)+\right.\nonumber\\
&& \hspace{35pt} \left.h(Y_2^n)+I(X_2^n;U_2^n|W_0)+I(X_2^n;Y_2^n|U_2^n,W_0)-h(Y_2^n|X_2^n,W_0)\right)\\
&& \hspace{20pt} =\; \frac{1}{2}\left(h(Y_1^n)+h(U_1^n|W_0)-h(U_1^n|X_1^n,W_0)+\right.\nonumber\\
&& \hspace{35pt} h(Y_1^n|U_1^n,W_0)-h(Y_1^n|X_1^n,U_1^n,W_0)-h(Y_1^n|X_1^n,W_0)+\nonumber\\
&& \hspace{35pt} h(Y_2^n)+h(U_2^n|W_0)-h(U_2^n|X_2^n,W_0)+\nonumber\\
&& \hspace{35pt} \left.h(Y_2^n|U_2^n,W_0)-h(Y_2^n|X_2^n,U_2^n,W_0)-h(Y_2^n|X_2^n,W_0)\right)\\
&& \hspace{20pt} =\; \frac{1}{2}\left(-h(U_1^n|X_1^n,W_0)-h(U_2^n|X_2^n,W_0)+\right.\nonumber\\
&& \hspace{35pt} h(Y_1^n)+h(Y_2^n)+\nonumber\\
&& \hspace{35pt} h(Y_1^n|U_1^n,W_0)+h(Y_2^n|U_2^n,W_0)+\nonumber\\
&& \hspace{35pt} h(U_1^n|W_0)-h(Y_2^n|X_2^n,U_2^n,W_0)-h(Y_2^n|X_2^n,W_0)+\nonumber\\
&& \hspace{35pt}
\left.h(U_2^n|W_0)-h(Y_1^n|X_1^n,U_1^n,W_0)-h(Y_1^n|X_1^n,W_0)\right)
\label{eq:T3}
\end{eqnarray}
for any genie signals $(U_1^n,U_2^n)$.

Motivated by the problem of two-user Gaussian interference channel
\emph{without} common information, we shall choose
\begin{equation}
U_{ki} = cX_{ki}+\tilde{Z}_{ki}
\end{equation}
and $\tilde{Z}_{ki}$ are i.i.d.
$\mathcal{N}(0,\mathrm{Var}(\tilde{Z}_k))$ and are \emph{correlated}
with the noise signal $Z_{ki}$ as
\begin{equation}
Z_{ki}=\frac{a_k}{\sqrt{\mathrm{Var}(\tilde{Z}_k)}}\tilde{Z}_{ki}+N_{ki}
\end{equation}
Let
\begin{equation}
P_k:=\frac{1}{n}\sum_{i=1}^{n}\mathrm{Var}(X_{ki}|W_0)
\end{equation}
Note that
\begin{equation}
h(U_k^n|X_k^n,W_0)=h(cX_k^n+\tilde{Z}_k^n|X_k^n,W_0)=h(\tilde{Z}_k^n)
=\frac{n}{2}\log2\pi{e}\mathrm{Var}(\tilde{Z}_k)
\end{equation}
Next, we shall bound from above the rest of the terms on the RHS of
\eqref{eq:T3} in terms of $P_1$, $P_2$, $a_1$,
$\mathrm{Var}(\tilde{Z}_1)$, $a_2$ and $\mathrm{Var}(\tilde{Z}_2)$.

First, let us consider $h(Y_1^n)$ and $h(Y_2^n)$. We have
\begin{eqnarray}
h(Y_1^n) & \leq & \sum_{i=1}^{n}h(Y_{1i})\\
& \leq & \sum_{i=1}^{n}\frac{1}{2}\log2\pi
e\mathrm{Var}(Y_{1i})\\
& \leq &\frac{n}{2}\log\left[\frac{2\pi
e}{n}\sum_{i=1}^{n}\mathrm{Var}(Y_{1i})\right]\label{eq:T4}
\end{eqnarray}
where \eqref{eq:T4} is due to the concavity of the $\log$ function.
The variance
\begin{eqnarray}
\hspace{-20pt}
\mathrm{Var}(Y_{1i}) &=& \mathrm{Var}(X_{1i}+cX_{2i}+Z_{1i})\\
&=& \mathrm{Var}(X_{1i}+cX_{2i})+\mathrm{Var}(Z_{1i})\\
&=& \mathrm{Var}(X_{1i})+c^2\mathrm{Var}(X_{2i})+
2c\mathbb{E}[(X_{1i}-\mathbb{E}[X_{1i}])(X_{2i}-\mathbb{E}[X_{2i}])]+\mathrm{Var}(Z_{1i})
\label{eq:T5}
\end{eqnarray}
where the cross term
\begin{eqnarray}
&& \mathbb{E}[(X_{1i}-\mathbb{E}[X_{1i}])(X_{2i}-\mathbb{E}[X_{2i}])]\nonumber\\
&& \hspace{20pt}=\;
\mathbb{E}\left[\mathbb{E}\left[(X_{1i}-\mathbb{E}[X_{1i}])(X_{2i}-\mathbb{E}[X_{2i}])|W_0\right]\right]\\
&& \hspace{20pt}=\;
\mathbb{E}\left[\mathbb{E}\left[X_{1i}-\mathbb{E}[X_{1i}]|W_0\right]
\mathbb{E}\left[X_{2i}-\mathbb{E}[X_{2i}]|W_0\right]\right]\label{eq:T6}\\
&& \hspace{20pt}=\;
\mathbb{E}\left[\left(\mathbb{E}[X_{1i}|W_0]-\mathbb{E}[X_{1i}]\right)
\left(\mathbb{E}[X_{2i}|W_0]-\mathbb{E}[X_{2i}]\right)\right]\\
&& \hspace{20pt}\leq\;
\sqrt{\mathbb{E}\left[\left(\mathbb{E}[X_{1i}|W_0]-\mathbb{E}[X_{1i}]\right)^2\right]
\mathbb{E}\left[\left(\mathbb{E}[X_{2i}|W_0]-\mathbb{E}[X_{2i}]\right)^2\right]}\label{eq:T7}
\end{eqnarray}
where \eqref{eq:T6} follows from the independence of $X_{1i}$ and
$X_{2i}$ given $W_0$, and \eqref{eq:T7} follows from the
Cauchy-Schwartz inequality. Furthermore,
\begin{eqnarray}
\hspace{-20pt} E\left[\left(E[X_{ki}|W_0]-E[X_{ki}]\right)^2\right]
&=&
E\left[\left(E[X_{ki}|W_0]\right)^2\right]-\left(E[X_{ki}]\right)^2\\
&=&\left(E[X_{ki}^2]-\left(E[X_{ki}]\right)^2\right)-
\left(E[X_{ki}^2]-E\left[\left(E[X_{ki}|W_0]\right)^2\right]\right)\\
&=&\mathrm{Var}(X_{ki})-\mathrm{Var}(X_{ki}|W_0)\label{eq:T8}
\end{eqnarray}
Substituting \eqref{eq:T7} and \eqref{eq:T8} into \eqref{eq:T5}, we
may obtain
\begin{eqnarray}
\hspace{-20pt}&&\frac{1}{n}\sum_{i=1}^n\mathrm{Var}(Y_{1i})\nonumber\\
\hspace{-20pt}&& \hspace{5pt} \leq \;
\frac{1}{n}\sum_{i=1}^n\mathrm{Var}(X_{1i})+\frac{c^2}{n}\sum_{i=1}^n\mathrm{Var}(X_{2i})+\nonumber\\
\hspace{-20pt}&& \hspace{20pt}
\frac{2c}{n}\sum_{i=1}^n\sqrt{\left(\mathrm{Var}(X_{1i})-\mathrm{Var}(X_{1i}|W_0)\right)
\left(\mathrm{Var}(X_{2i})-\mathrm{Var}(X_{2i}|W_0)\right)}+\mathrm{Var}(Z_{1})\\
\hspace{-20pt}&& \hspace{5pt} \leq \;
\frac{1}{n}\sum_{i=1}^n\mathrm{Var}(X_{1i})+\frac{c^2}{n}\sum_{i=1}^n\mathrm{Var}(X_{2i})+\nonumber\\
\hspace{-20pt}&& \hspace{20pt}
2c\sqrt{\left(\frac{1}{n}\sum_{i=1}^n\mathrm{Var}(X_{1i})-\frac{1}{n}\sum_{i=1}^n\mathrm{Var}(X_{1i}|W_0)\right)
\left(\frac{1}{n}\sum_{i=1}^n\mathrm{Var}(X_{2i})-\frac{1}{n}\sum_{i=1}^n\mathrm{Var}(X_{2i}|W_0)\right)}+\nonumber\\
\hspace{-20pt}&& \hspace{20pt} \mathrm{Var}(Z_{1})\label{eq:T9}\\
\hspace{-20pt}&& \hspace{5pt} \leq \;
P+c^2P+2c\sqrt{(P-P_1)(P-P_2)}+1 \label{eqn:temp6}
\end{eqnarray}
where \eqref{eq:T9} follows from the fact that
\begin{equation}
g(x_1,x_2,y_1,y_2)=\sqrt{(y_1-x_1)(y_2-x_2)}
\end{equation}
is jointly concave for $x_1 \leq y_1$, $x_2 \leq y_2$. Hence, we
have
\begin{equation}
h(Y_1^n) \leq \frac{n}{2}\log\left[2\pi
e\left(P+c^2P+2c\sqrt{(P-P_1)(P-P_2)}+1\right)\right]
\end{equation}
and similarly
\begin{equation}
h(Y_2^n) \leq \frac{n}{2}\log\left[2\pi
e\left(P+c^2P+2c\sqrt{(P-P_1)(P-P_2)}+1\right)\right]
\end{equation}

Next, we consider $h(Y_1^n|U_1^n,W_0)$ and $h(Y_2^n|U_2^n,W_0)$. We
have
\begin{eqnarray}
&& h(Y_1^n|U_1^n,W_0)\nonumber\\
&& \hspace{5pt} \leq \;
\sum_{i=1}^{n}h(Y_{1i}|U_{1i},W_0)\\
&& \hspace{5pt} = \; \sum_{i=1}^{n}h(X_{1i}+cX_{2i}+Z_{1i}|cX_{1i}+\tilde{Z}_{1i},W_0)\\
&& \hspace{5pt} \leq \; \sum_{i=1}^{n}\frac{1}{2}\log\left[2\pi{e}\mathrm{Var}(X_{1i}+cX_{2i}+Z_{1i}|cX_{1i}+\tilde{Z}_{1i},W_0)\right]\\
&& \hspace{5pt} \leq \; \frac{n}{2}\log\left[\frac{2\pi{e}}{n}\sum_{i=1}^{n}\mathrm{Var}(X_{1i}+cX_{2i}+Z_{1i}|cX_{1i}+\tilde{Z}_{1i},W_0)\right]\\
&& \hspace{5pt} \le \;
\frac{n}{2}\log\left[\frac{2\pi{e}}{n}\sum_{i=1}^{n}
\left(\mathrm{Var}(X_{1i}|W_0)+c^2\mathrm{Var}(X_{2i}|W_0)+1-\right.\right.\nonumber\\
&& \hspace{20pt}
\left.\left.\frac{\left(c\mathrm{Var}(X_{1i}|W_0)+a_1\sqrt{\mathrm{Var}(\tilde{Z}_1)}\right)^2}
{c^2\mathrm{Var}(X_{1i}|W_0)+\mathrm{Var}(\tilde{Z}_1)}\right)\right]\\
&& \hspace{5pt} \leq \; \frac{n}{2}\log\left[2\pi{e}
\left(\frac{1}{n}\sum_{i=1}^{n}\mathrm{Var}(X_{1i}|W_0)+\frac{c^2}{n}\sum_{i=1}^{n}\mathrm{Var}(X_{2i}|W_0)+1-\right.\right.\nonumber\\
&& \hspace{20pt}
\left.\left.\frac{\left(\frac{c}{n}\sum_{i=1}^{n}\mathrm{Var}(X_{1i}|W_0)+a_1\sqrt{\mathrm{Var}(\tilde{Z}_1)}\right)^2}
{\frac{c^2}{n}\sum_{i=1}^{n}\mathrm{Var}(X_{1i}|W_0)+\mathrm{Var}(\tilde{Z}_1)}\right)\right]\label{eq:T10}\\
&& \hspace{5pt} = \; \frac{n}{2}\log\left[2\pi{e}
\left(P_1+c^2P_2+1-\frac{\left(cP_1+a_1\sqrt{\mathrm{Var}(\tilde{Z}_1)}\right)^2}
{c^2P_1+\mathrm{Var}(\tilde{Z}_1)}\right)\right]  \label{eqn:temp7}
\end{eqnarray}
where \eqref{eq:T10} follows from the fact that
\begin{equation}
g(x,y) :=
x+c^2y+1-\frac{\left(cx+a_1\mathrm{Var}(\tilde{Z}_1)\right)^2}{c^2x+\mathrm{Var}(\tilde{Z}_1)}
\end{equation}
is jointly concave for $x\geq0$, $y\geq0$. Similarly, we may also
obtain that
\begin{equation}
h(Y_2^n|U_2^n,W_0) \leq \frac{n}{2}\log\left[2\pi{e}
\left(P_2+c^2P_1+1-\frac{\left(cP_2+a_2\sqrt{\mathrm{Var}(\tilde{Z}_2)}\right)^2}
{c^2P_2+\mathrm{Var}(\tilde{Z}_2)}\right)\right]  \label{eqn:temp8}
\end{equation}

Finally, let us consider
$h(U_1^n|W_0)-h(Y_2^n|X_2^n,U_2^n,W_0)-h(Y_2^n|X_2^n,W_0)$ and
$h(U_2^n|W_0)-h(Y_1^n|X_1^n,U_1^n,W_0)-h(Y_1^n|X_1^n,W_0)$. We have
\begin{eqnarray}
&&
h(U_1^n|W_0)-h(Y_2^n|X_2^n,U_2^n,W_0)-h(Y_2^n|X_2^n,W_0)\nonumber\\
&& \hspace{20pt} =\;
h(cX_1^n+\tilde{Z}_1^n|W_0)-h(X_2^n+cX_1^n+Z_2^n|X_2^n,cX_2^n+\tilde{Z}_2^n,W_0)-\nonumber\\
&& \hspace{35pt} h(X_2^n+cX_1^n+Z_2^n|X_2^n,W_0)\\
&& \hspace{20pt} =\;
h(cX_1^n+\tilde{Z}_1^n|W_0)-h(cX_1^n+N_2^n|W_0)-h(cX_1^n+Z_2^n|W_0)
\label{eq:T11}
\end{eqnarray}
Assuming that
\begin{equation}
\mathrm{Var}(\tilde{Z}_1) \leq \mathrm{Var}(N_2)=1-a_2^2 \label{eqn:temp1}
\end{equation}
by the (conditional) entropy-power inequality, we have
\begin{equation}
h(cX_1^n+N_2^n|W_0) \geq
\frac{n}{2}\log\left(e^{\frac{2}{n}h(cX_1^n+\tilde{Z}_1^n|W_0)}+
2\pi{e}\left(1-a_2^2-\mathrm{Var}(\tilde{Z}_1)\right)\right)
\label{eq:T12}
\end{equation}
and
\begin{equation}
h(cX_1^n+Z_2^n|W_0) \geq
\frac{n}{2}\log\left(e^{\frac{2}{n}h(cX_1^n+\tilde{Z}_1^n|W_0)}+
2\pi{e}\left(1-\mathrm{Var}(\tilde{Z}_1)\right)\right)
\label{eq:T13}
\end{equation}
Substituting \eqref{eq:T12} and \eqref{eq:T13} into \eqref{eq:T11},
we may obtain
$$h(U_1^n|W_0)-h(Y_2^n|X_2^n,U_2^n,W_0)-h(Y_2^n|X_2^n,W_0) \leq
ng(t)$$ where
\begin{equation}
g(t):=t-\frac{1}{2}\log\left(e^{2t}+2\pi{e}\left(1-a_2^2-\mathrm{Var}(\tilde{Z}_1)\right)\right)-
\frac{1}{2}\log\left(e^{2t}+2\pi{e}\left(1-\mathrm{Var}(\tilde{Z}_1)\right)\right)
\end{equation}
and
\begin{eqnarray}
t &:=& \frac{1}{n}h(cX_1^n+\tilde{Z}_1^n|W_0)\\
& \leq & \frac{1}{n}\sum_{i=1}^nh(cX_{1i}+\tilde{Z}_{1i}|W_0)\\
& \leq & \frac{1}{2n}\sum_{i=1}^n\log\left[2\pi{e}\mathrm{Var}(cX_{1i}+\tilde{Z}_{1i}|W_0)\right]\\
& = & \frac{1}{2n}\sum_{i=1}^n\log\left[2\pi{e}\left(c^2\mathrm{Var}(X_{1i}|W_0)+\mathrm{Var}(\tilde{Z}_1)\right)\right]\\
& \leq & \frac{1}{2}\log\left[2\pi{e}\left(\frac{c^2}{n}\sum_{i=1}^n\mathrm{Var}(X_{1i}|W_0)+\mathrm{Var}(\tilde{Z}_1)\right)\right]\\
& = &
\frac{1}{2}\log\left[2\pi{e}\left(c^2P_1+\mathrm{Var}(\tilde{Z}_1)\right)\right]
\label{eq:T14}
\end{eqnarray}
The derivative
$$g'(t)=\frac{(2\pi{e})^2\left(1-a_2^2-\mathrm{Var}(\tilde{Z})\right)\left(1-\mathrm{Var}(\tilde{Z})\right)-e^{4t}}
{\left(e^{2t}+2\pi{e}(1-a_2^2-\mathrm{Var}(\tilde{Z}))\right)\left(e^{2t}+2\pi{e}(1-\mathrm{Var}(\tilde{Z}))\right)}$$
so $g(t)$ is a monotone increasing function for
\begin{equation}
t \leq
\frac{1}{4}\log\left[(2\pi{e})^2\left(1-a_2^2-\mathrm{Var}(\tilde{Z}_1)\right)\left(1-\mathrm{Var}(\tilde{Z}_1)\right)\right]
\end{equation}
Assuming that
\begin{equation}
c^2P_1 \leq
\sqrt{(1-a_2^2-\mathrm{Var}(\tilde{Z}_1))(1-\mathrm{Var}(\tilde{Z}_1))}-\mathrm{Var}(\tilde{Z}_1)\label{eqn:temp2}
\end{equation}
we have from \eqref{eq:T14}
\begin{eqnarray}
\hspace{-20pt}&&
h(U_1^n|W_0)-h(Y_2^n|X_2^n,U_2^n,W_0)-h(Y_2^n|X_2^n,W_0)\nonumber\\
\hspace{-20pt}&& \hspace{20pt} \leq \;
g\left(\frac{1}{2}\log\left[2\pi{e}\left(c^2P_1+\mathrm{Var}(\tilde{Z}_1)\right)\right]\right)\\
\hspace{-20pt}&& \hspace{20pt} \leq \; \frac{n}{2}
\left[\log2\pi{e}\left(c^2P_1+\mathrm{Var}(\tilde{Z}_1)\right)
-\log2\pi{e}\left(c^2P_1+1-a_2^2\right)-
\log2\pi{e}\left(c^2P_1+1\right)\right]  \label{eqn:temp9}
\end{eqnarray}
Similarly, assuming that
\begin{equation}
\mathrm{Var}(\tilde{Z}_2) \leq 1-a_1^2 \label{eqn:temp3}
\end{equation}
and
\begin{equation}
c^2P_2 \leq
\sqrt{\left(1-a_1^2-\mathrm{Var}(\tilde{Z}_2)\right)\left(1-\mathrm{Var}(\tilde{Z}_2)\right)}-\mathrm{Var}(\tilde{Z}_2) \label{eqn:temp4}
\end{equation}
we have
\begin{eqnarray}
\hspace{-20pt} &&
h(U_2^n|W_0)-h(Y_1^n|X_1^n,U_1^n,W_0)-h(Y_1^n|X_1^n,W_0)\nonumber\\
\hspace{-20pt} && \hspace{20pt} \leq \; \frac{n}{2}
\left[\log2\pi{e}\left(c^2P_2+\mathrm{Var}(\tilde{Z}_2)\right)
-\log2\pi{e}\left(c^2P_2+1-a_1^2\right)-
\log2\pi{e}\left(c^2P_2+1\right)\right]. \label{eqn:temp10}
\end{eqnarray}

Therefore, we have
\begin{align}
  R_0+R_1+R_2 \le \max_{0 \leq P_1, P_2 \le P} g(P_1,P_2).
\end{align}

Next we argue that we only need to consider the case when $P_1 = P_2$.

Recall that  $P_k (k=1,2)$ is defined as
 \begin{align}
     P_k := \frac{1}{n}\sum_{i=1}^{n}\text{Var}(X_{ki}|W_0).
 \end{align}

 We show that in this symmetric model, given any transmission scheme, one can easily construct another transmission scheme achieving the same sum rate with $P_1 = P_2$. Indeed, suppose in the given transmission scheme $P_1 \neq P_2$. We construct another transmission scheme as follows:
 \begin{itemize}
    \item In the first time block, we use the same code book of the given transmission scheme.
    \item In the second time block, since the channel is symmetric, we can switch the roles of user 1 and user 2 and use the same transmission scheme achieving the same sum rate.
  \end{itemize}
Hence the new transmission scheme achieves the same sum rate with $P_1' = P_2' = \frac{P_1 + P_2}{2}$.

Therefore, we have
\begin{align}
  R_0+R_1+R_2 \le \max_{0 \leq P_1 = P_2 \le P} g(P_1,P_2).
\end{align}

This completes the proof of Theorem \ref{thm:upper}.

\end{IEEEproof}

\section{Tightness of Upper Bound in the Low Interference Regime}\label{sec:tightness}
In this section, we first given a lower bound on the sum capacity, and then show that the upper bound given in Theorem \ref{thm:upper} matches the lower bound in the low interference regime. 


A simple coding scheme is that each transmitter splits the power $P$ into two parts, one for common message $M_0$ and one for the privacy message, and does channel coding for each message independently, and each receiver decodes the intended messages by using successive interference cancellation. The transmission sum rates of this superposition coding scheme are a natural lower bound for the sum capacity.

\begin{lemma}\label{Inner}
Given $P_1 = P_2 = P-P_0$, the maximum sum rates achieved by the above superposition coding scheme is
\begin{align}
 &  R_0 + R_1 + R_2 \nonumber \\
 \le&  \frac{1}{2} \left(\log (1 + \frac{P_1 + (1+c)^2P_0}{c^2P_2 + 1}) + \log(1 + \frac{P_2}{c^2P_1 + 1})\right)  \label{eq:inner}\\
 =& I(X_{1G},X_{0G};Y_{1G}) + I(X_{2G},Y_{2G}|X_{0G}) \nonumber \\
 =& I(X_{2G},X_{0G};Y_{2G}) + I(X_{1G},Y_{1G}|X_{0G}), \nonumber
\end{align}
where
\begin{align*}
  X_{1G} &= X_{11G} + \sqrt{P - P1} X_{0G}, \\
  X_{2G} &= X_{22G} + \sqrt{P - P2} X_{0G}, \\
  Y_{1G} &= X_{11G} + (1+c)\sqrt{P - P1}X_{0G} + c X_{22G} + Z_1, \\
  Y_{2G} &= X_{22G} + (1+c)\sqrt{P - P2}X_{0G} + c X_{11G} + Z_2,
\end{align*}
and $X_{11G}, X_{22G}, X_{0G}$ are zero mean Gaussian random variables with variances $P_1, P_2$ and 1.

\end{lemma}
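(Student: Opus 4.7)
The plan is to verify directly that the specified Gaussian superposition scheme with successive interference cancellation achieves the claimed sum rate, and then check that the closed-form bound in (\ref{eq:inner}) coincides with the two mutual-information expressions by straightforward entropy computations.

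First I would fix the codebook. Both transmitters, since both possess $W_{0}$, encode it coherently into the same i.i.d. Gaussian sequence of power $P_{0}=P-P_{1}=P-P_{2}$, giving the component $\sqrt{P-P_{k}}\,X_{0G}$ at transmitter $k$. Independently, transmitter $k$ encodes $W_{k}$ into an i.i.d. Gaussian $X_{kkG}$ of power $P_{k}$, independent of $X_{0G}$ and of the other private codeword. The transmitted signal is $X_{k}=X_{kkG}+\sqrt{P-P_{k}}\,X_{0G}$, which has power exactly $P$ and produces the received signals $Y_{kG}$ listed in the statement. The key structural observation is that because the common codeword is transmitted by both users, it arrives at receiver $k$ with coherent amplitude $(1+c)\sqrt{P-P_{k}}$, so its effective received power is $(1+c)^{2}P_{0}$ rather than $P_{0}$.

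Next I would analyze the two-stage SIC decoder at each receiver. At receiver $1$, in stage one, treat the two private codewords as Gaussian noise of total variance $P_{1}+c^{2}P_{2}+1$ and decode $W_{0}$; by the standard Gaussian channel coding theorem this is reliable whenever
\begin{equation}
R_{0}\le \tfrac{1}{2}\log\!\Bigl(1+\tfrac{(1+c)^{2}P_{0}}{P_{1}+c^{2}P_{2}+1}\Bigr).
\end{equation}
After subtracting the decoded common component, receiver $1$ faces $X_{11G}+cX_{22G}+Z_{1}$ and decodes $W_{1}$ treating $cX_{22G}$ as noise, giving $R_{1}\le \tfrac{1}{2}\log(1+P_{1}/(c^{2}P_{2}+1))$. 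An identical argument at receiver $2$, with the same $R_{0}$ constraint by symmetry, yields $R_{2}\le \tfrac{1}{2}\log(1+P_{2}/(c^{2}P_{1}+1))$.

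Summing these three bounds and combining the first two logarithms via $\log(1+a/b)+\log(1+c/(b+a))=\log(1+(a+c)/b)$ with $a=(1+c)^{2}P_{0}$, $b=c^{2}P_{2}+1$, $c=P_{1}$ collapses the common-message term into the private-plus-common term at receiver $1$ and yields exactly (\ref{eq:inner}). Finally I would verify the two mutual-information identities by direct Gaussian entropy computation: for $I(X_{1G},X_{0G};Y_{1G})=h(Y_{1G})-h(Y_{1G}\mid X_{1G},X_{0G})$, conditioning on $(X_{1G},X_{0G})$ determines $X_{11G}$ and leaves $cX_{22G}+Z_{1}$ of variance $c^{2}P_{2}+1$, while $\mathrm{Var}(Y_{1G})=P_{1}+(1+c)^{2}P_{0}+c^{2}P_{2}+1$; an analogous computation handles $I(X_{2G};Y_{2G}\mid X_{0G})$, matching the closed form. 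There is no real obstacle here beyond bookkeeping; the only point that needs care is the coherent-combining factor $(1+c)^{2}$, which is what distinguishes this bound from an uninformed common-signal sharing scheme.
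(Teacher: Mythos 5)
Your proof is correct and arrives at the same closed form, but by a somewhat more elementary route than the paper's. The paper treats each receiver as a two-user Gaussian MAC (the common codeword and that receiver's private codeword, with the other private signal lumped into the noise), writes down all six MAC rate inequalities, and performs Fourier--Motzkin elimination on $R_0$; then it uses the symmetry $a=d$, $b=e$, $c=f$ together with the built-in MAC facts $a+b\geq c$, $d+e\geq f$ to identify $\min\{a+b+e,\,b+d+e,\,c+e,\,b+f\}=c+e=b+f$ as the sum-rate bound. You instead analyze the two-stage SIC decoder directly (decode $W_0$ treating both private signals as noise, subtract, then decode the private message), obtain three rate constraints, and collapse the $R_0$ constraint with the receiver-1 private constraint via the telescoping identity $\log\bigl(1+\tfrac{a}{b+c}\bigr)+\log\bigl(1+\tfrac{c}{b}\bigr)=\log\bigl(1+\tfrac{a+c}{b}\bigr)$, which yields \eqref{eq:inner} immediately; the mutual-information identities then follow from Gaussian entropy bookkeeping exactly as you indicate. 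Your route is more transparent about where the coherent $(1+c)^2 P_0$ gain comes from and avoids Fourier--Motzkin entirely; the paper's route implicitly handles the whole achievable pentagon at each receiver, not just the SIC corner, but for the sum rate the two coincide because the binding Fourier--Motzkin term $c+e$ is precisely the SIC corner you compute. Both proofs are valid.
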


\begin{IEEEproof}
Gaussian random variables $X_{11G}, X_{22G}$ and $X_{0G}$ correspond to the codebooks for messages $W_1, W_2$ and $W_0$ in the superposition coding scheme.

The channel input-output relation is
\begin{align*}
  Y_{1G} &= X_{11G} + (1+c)\sqrt{P - P1} X_{0G} + c X_{22G} + Z_1 \\
  Y_{2G} &= X_{22G} + (1+c)\sqrt{P - P2} X_{0G} + c X_{11G} + Z_2,
\end{align*}

Treating interference as noise, we can write down the expressions of the MAC capacity region (six inequalities in total) and get the maximum sum rates \eqref{eq:inner} by using Fourier-Motzkin elimination. More specifically, given $a,b,c,d,e$ and $f$, from the following inequalities
\begin{eqnarray}
  R_0 &\le& a \\
  R_1 &\le& b \\
  R_0 + R_1 &\le& c \\
  R_0 &\le& d \\
  R_2 &\le& e \\
  R_0 + R_2 &\le& f
\end{eqnarray}
we can get a tight upper bound for $R_0 + R_1 + R_2$ via Fourier-Motzkin elimination:
\begin{equation}
    R_0 + R_1 + R_2 \le \min\{a+b+e, b+d+e, c+e, b+f \}.
\end{equation}
In our MAC channel, we have the implicit conditions $a + b \ge c$, $d + e \ge f$ and by symmetry $a=d,b=e,c=f$. Thus $\min\{a+b+e, b+d+e, c+e, b+f \}$ = $b+f$ = $c+e$. This completes our proof.
\end{IEEEproof}

We show that the lower bound \eqref{eq:inner} and the upper bound \eqref{thm:upper} match for all $0 \le P_1 = P_2 \le P$ if the parameters $P$ and $c$ satisfy certain conditions. More precisely,

\begin{theorem}\label{thm:main}
Given $P$ and $c$, if there exist nonnegative parameters $a$ and $b$ such that
 \begin{eqnarray}
        c(1+c^2P) &=& ab   \label{eq:1}\\
        c^2P &\le& \sqrt{(1-a^2-b^2)(1-b^2)} - b^2 \label{eq:2}\\
        a^2 + b^2 &\le& 1 \label{eq:3}.
 \end{eqnarray}
then the lower bound \eqref{eq:inner} and the outer bound \eqref{eqn:upper} match, i.e.,
\begin{align}
  g(P_1,P_2) = \frac{1}{2} \left(\log (1 + \frac{P_1 + (1+c)^2P_0}{c^2P_2 + 1}) + \log(1 + \frac{P_2}{c^2P_1 + 1})\right)
\end{align}
for $0 \le P_1 = P_2 \le P$.
\end{theorem}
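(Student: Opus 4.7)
The plan is to exploit the fact that $g(P_1,P_2)$ is defined as a minimum of $f$ over the feasible set $\mathcal{A}(P_1,P_2)$: by producing, for each $P_1=P_2\in[0,P]$, one feasible quadruple at which $f$ equals the right-hand side of the lower bound from Lemma~\ref{Inner}, we get $g(P_1,P_2)\le$ (lower bound); combining with the trivial chain $g(P_1,P_2)\ge$ (sum capacity) $\ge$ (lower bound) yields equality.

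Exploiting symmetry, I would set $a_1=a_2=a$ (the parameter furnished by hypothesis) and $\mathrm{Var}(\tilde Z_1)=\mathrm{Var}(\tilde Z_2)=b'(P_1)^2$ with $b'(P_1):=c(1+c^2P_1)/a$; note that this coincides with $b$ at $P_1=P$ thanks to hypothesis~\eqref{eq:1}. The design principle $a\,b'(P_1)=c(1+c^2P_1)$ is a ``moving'' analogue of \eqref{eq:1} tailored to each $P_1$. With $P_1=P_2$, the first logarithm in $f$ collapses because $\sqrt{(P-P_1)(P-P_2)}=P-P_1$ and the numerator becomes a perfect square, so this term contributes $\tfrac{1}{2}\log\bigl(1+\tfrac{P_1+(1+c)^2P_0}{c^2P_1+1}\bigr)$, matching the first term of the lower bound. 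By symmetry the second and third logarithms coincide.

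The core algebraic step is then to verify that the second logarithm equals $\log\bigl(1+\tfrac{P_1}{c^2P_1+1}\bigr)$, i.e.,
\[
\frac{(P_1+c^2P_1+1)(c^2P_1+b'^2)-(cP_1+ab')^2}{(c^2P_1+1-a^2)\,b'^2}\;=\;\frac{P_1+c^2P_1+1}{c^2P_1+1}.
\]
I expect this to be the main obstacle. Clearing denominators and expanding, the difference between the two sides simplifies (up to a positive factor) to $P_1\bigl[c^3P_1+(c-ab')\bigr]^2$, which vanishes precisely because of the chosen relation $ab'=c(1+c^2P_1)$. Carrying out this expansion and spotting the perfect-square factorization is where the bookkeeping is most delicate.

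Finally, it remains to check feasibility of $(a,a,b'(P_1)^2,b'(P_1)^2)$ in $\mathcal{A}(P_1,P_1)$. Since $b'(P_1)^2$ is nondecreasing in $P_1$ and $b'(P)^2=b^2\le 1-a^2$ by hypothesis~\eqref{eq:3}, the bound $b'^2\le 1-a^2$ holds throughout $P_1\in[0,P]$. For the remaining constraint, the function $y\mapsto\sqrt{(1-a^2-y)(1-y)}-y$ is decreasing in $y$, and both $b'(P_1)^2$ and $c^2P_1$ are increasing in $P_1$; hence $\sqrt{(1-a^2-b'^2)(1-b'^2)}-b'^2-c^2P_1$ is decreasing in $P_1$, so hypothesis~\eqref{eq:2} (which is the case $P_1=P$) implies the desired inequality for every $P_1\in[0,P]$, completing the verification.
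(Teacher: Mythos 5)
Your strategy is in essence the paper's: for each $P_1=P_2\in[0,P]$ you select the genie parameters $a_1=a_2=a$ and $\mathrm{Var}(\tilde Z_1)=\mathrm{Var}(\tilde Z_2)=b'(P_1)^2$ with $a\,b'(P_1)=c(1+c^2P_1)$ (the ``smart genie'' condition, matching the paper's hypothesis at $P_1=P$ and walking $b$ down as $P_1$ decreases while $a$ stays fixed, exactly as the paper argues), check feasibility via the monotonicity you describe, and then verify that $f$ at this point equals the achievable rate. The difference is only in how that last equality is established: the paper re-expresses both bounds as Gaussian mutual informations (Lemma \ref{Outer}), identifies the gap as $\tfrac12\bigl(I(X_{1G};U_{1G}|X_{0G},Y_{1G})+I(X_{2G};U_{2G}|X_{0G},Y_{2G})\bigr)$, and shows this vanishes iff a Gaussian Markov chain holds (which is the covariance identity $c(1+c^2P_1)=ab'$), whereas you just expand $f$ and cancel. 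Your algebra is correct: with $cP_1+ab'=c(1+P_1+c^2P_1)$ the numerator factors as $(1+P_1+c^2P_1)\bigl(b'^2-c^2(1+c^2P_1)\bigr)$, and the required identity reduces to $c^2(1+c^2P_1)^2=a^2b'^2$, which is the design constraint; the first log already collapses as you say. Either packaging is fine; the paper's buys the conceptual interpretation (necessity as well as sufficiency of the smart-genie relation), yours is more elementary.

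One slip worth fixing: the ``trivial chain $g(P_1,P_2)\ge$ (sum capacity) $\ge$ (lower bound)'' is not correct as written, since for a fixed split $g(P_1,P_2)$ does \emph{not} dominate the sum capacity (only $\max_{P_1=P_2}g$ does). The right argument for the $\ge$ direction is that the derivation of $f$ upper-bounds the sum rate of \emph{any} code whose conditional variances $\tfrac1n\sum_i\mathrm{Var}(X_{ki}\mid W_0)$ equal the given $(P_1,P_2)$, for every feasible genie; the superposition scheme with $P_0=P-P_1$ is one such code and achieves \eqref{eq:inner}, so every feasible $f\ge$ \eqref{eq:inner} and hence $g(P_1,P_2)\ge$ \eqref{eq:inner}. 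With that correction your proof is complete and tracks the paper's.
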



Before proving Theorem \ref{thm:main}, note that by setting $a^2 = \frac{1}{2}$ to be a valid solution of the above constraints, we can derive a sufficient condition under which the lower bound \eqref{eq:inner} and the outer bound \eqref{thm:upper} match.
\begin{corollary}
If the parameters $P$ and $c$ satisfy the following low interference conditions:
\begin{equation}\label{eq:4}
 c^4 P^2 + (4c^2P+3)c^2(1+c^2P)^2 \le \frac{1}{2}
\end{equation}
and
\begin{equation}\label{eq:5}
 c(1+c^2P) \le \frac{1}{2},
\end{equation}
then the lower bound \eqref{eq:inner} and the outer bound \eqref{thm:upper} match.
\end{corollary}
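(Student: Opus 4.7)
The plan is to invoke Theorem \ref{thm:main} by explicitly constructing nonnegative parameters $a$ and $b$ satisfying its three conditions \eqref{eq:1}--\eqref{eq:3} under the hypotheses \eqref{eq:4} and \eqref{eq:5}. Following the hint given just before the corollary, I would fix $a^2 = 1/2$ (so $a = 1/\sqrt{2} \ge 0$) and then let \eqref{eq:1} define $b$, namely $b = c(1+c^2P)/a = \sqrt{2}\,c(1+c^2P)$, which is nonnegative and by construction satisfies $ab = c(1+c^2P)$.

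Next I would verify \eqref{eq:3}. With the chosen values, $a^2 + b^2 = 1/2 + 2c^2(1+c^2P)^2$, so \eqref{eq:3} is equivalent to $c^2(1+c^2P)^2 \le 1/4$, i.e.\ to $c(1+c^2P) \le 1/2$, which is precisely hypothesis \eqref{eq:5}. In particular \eqref{eq:5} simultaneously guarantees $1 - a^2 - b^2 \ge 0$ and $1 - b^2 \ge 0$, so the square root appearing in \eqref{eq:2} is real and the right-hand side of \eqref{eq:2} is well defined.

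The main (but still routine) computation is to check \eqref{eq:2}. Because both sides of
\begin{equation*}
 c^2P + b^2 \;\le\; \sqrt{(1-a^2-b^2)(1-b^2)}
\end{equation*}
are nonnegative under \eqref{eq:5}, squaring is equivalence-preserving. Writing $t := c^2(1+c^2P)^2$, so $b^2 = 2t$, the squared form becomes
\begin{equation*}
 (c^2P + 2t)^2 \;\le\; \bigl(\tfrac{1}{2} - 2t\bigr)\bigl(1 - 2t\bigr).
\end{equation*}
Expanding the left side as $c^4P^2 + 4c^2P\,t + 4t^2$ and the right side as $\tfrac{1}{2} - 3t + 4t^2$, the $4t^2$ terms cancel and the inequality collapses to
\begin{equation*}
 c^4P^2 + (4c^2P + 3)\,c^2(1+c^2P)^2 \;\le\; \tfrac{1}{2},
\end{equation*}
which is exactly hypothesis \eqref{eq:4}.

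Having verified that $a = 1/\sqrt{2}$ and $b = \sqrt{2}\,c(1+c^2P)$ are nonnegative and satisfy \eqref{eq:1}--\eqref{eq:3}, Theorem \ref{thm:main} directly gives the matching of the lower bound \eqref{eq:inner} and the upper bound \eqref{eqn:upper} for all $0 \le P_1 = P_2 \le P$, completing the proof. I do not anticipate a genuine obstacle here; the only mild care needed is ensuring both sides of \eqref{eq:2} are nonnegative before squaring, which, as noted, is automatic from \eqref{eq:5}.
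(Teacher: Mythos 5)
Your proposal is correct and follows essentially the same route as the paper: fix $a^2=\tfrac12$, let \eqref{eq:1} determine $b^2=2c^2(1+c^2P)^2$, observe that \eqref{eq:5} is exactly \eqref{eq:3} for this choice, and check that \eqref{eq:2} reduces (after squaring) to \eqref{eq:4}, then invoke Theorem \ref{thm:main}. If anything, your verification of the squaring step and of the algebraic cancellation is spelled out more explicitly than in the paper, but the underlying argument is identical.
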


\begin{IEEEproof}
The idea is very simple. We find the region in which $a^2 = \frac{1}{2} $ is a valid solution of (\ref{eq:1}), (\ref{eq:2}) and (\ref{eq:3}).

Consider (\ref{eq:1}) and (\ref{eq:3}) and use the familiar inequality $a^2 + b^2 \ge 2ab$, we have
\begin{equation}
    c(1+c^2P) = ab  \le \frac{a^2 + b^2}{2} \le \frac{1}{2}.
\end{equation}

Let $m := c(1+c^2P)$, so $ m \le \frac{1}{2}$. From (\ref{eq:1}) and (\ref{eq:3}) we get
\begin{equation}
a^2 + \frac{m^2}{a^2} \le 1,
\end{equation}
thus
\begin{equation}
    \frac{1-\sqrt{1-4m^2}}{2} \le a^2 \le \frac{1+\sqrt{1-4m^2}}{2}.
\end{equation}

In (\ref{eq:2}), let $a^2 = \frac{1}{2}$, then we get
\begin{equation}
    c^2P \le \sqrt{(1-\frac{1}{2}-2m^2)(1-2m^2)} - 2m^2,
\end{equation}
equivalently,
\begin{equation}
   c^4 P^2 + (4c^2P+3)c^2(1+c^2P)^2 \le \frac{1}{2}.
\end{equation}

So if $P$ and $c$ satisfy (\ref{eq:4}) and (\ref{eq:5}), then $a^2 = \frac{1}{2}$ and $b^2 = 2 m^2$ are valid solutions of (\ref{eq:1}), (\ref{eq:2}) and (\ref{eq:3}). Therefore, the lower bound \eqref{eq:inner} and the outer bound \eqref{thm:upper} match due to Theorem \ref{thm:main}.

\end{IEEEproof}


\subsection{Proof of Theorem \ref{thm:main}}

The following Lemma \ref{Outer} is a restatement of \eqref{eqn:temp1}, \eqref{eqn:temp2}, \eqref{eqn:temp3}, \eqref{eqn:temp4} in Theorem \ref{thm:upper}, which is  the  so-called ``useful genie condition" defined in \cite{Sreekanth09}. Lemma \ref{Outer} says that if the useful genie condition is satisfied, then the capacity of genie aided channel is achieved by channel input with Gaussian distributions.

To simplify the notation, define $b_k := \sqrt{\text{Var}(\tilde{Z_k})}$.

\begin{lemma}\label{Outer}
Given the conditional variances $P_1$ and $P_2$, if
\begin{eqnarray}
  b_2^2 &\le& 1 - a_1^2  \\
  c^2P_2 &\le& \sqrt{(1-a_1^2-b_2^2)(1-b_2^2)} - b_2^2 \\
  b_1^2 &\le& 1 - a_2^2 \\
  c^2P_1 &\le& \sqrt{(1-a_2^2-b_1^2)(1-b_1^2)} - b_1^2
\end{eqnarray}

then
\begin{eqnarray}
  (R_0 + R_1 + R_2 - \epsilon)  &\le& \frac{1}{2} ( I(X_{1G};Y_{1G},U_{1G}|X_{0G}) + I(Y_{1G};X_{1G},X_{0G}) \nonumber\\
   & & +   I(X_{2G};Y_{2G},U_{2G}|X_{0G}) + I(Y_{2G};X_{2G},X_{0G}) ),
\end{eqnarray}
where $X_{kG}$ are the zero mean Gaussian random variables, and $U_{kG}, Y_{kG}$ are the corresponding Gaussian genie and output. More specifically,
\begin{eqnarray}
  X_{1G} &:=& X_{11G} + \sqrt{P - P1} X_{0G} , \\
  X_{2G} &:=& X_{22G} + \sqrt{P - P2} X_{0G} ,
\end{eqnarray}
where $X_{11G}, X_{22G}$ and $X_{0G}$ are independent zero mean Gaussian random variables with variance $P_1, P_2$ and $1$, respectively.
And accordingly,
\begin{eqnarray}
  Y_{1G} &=& X_{1G} + c X_{2G} + Z_1\\
  Y_{2G} &=& X_{2G} + c X_{1G} + Z_2\\
  U_{1G} &=& c X_{1G} + \tilde{Z}_1 \\
  U_{2G} &=& c X_{2G} + \tilde{Z}_2
\end{eqnarray}
\end{lemma}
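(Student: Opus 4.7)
The plan is to reorganize the chain of inequalities already established in the proof of Theorem \ref{thm:upper} into the clean mutual-information form stated above. The four useful genie conditions are precisely the inequalities \eqref{eqn:temp1}--\eqref{eqn:temp4} that were needed in Theorem \ref{thm:upper} to apply the entropy power inequality and then substitute the Gaussian value into the monotone-increasing region of the function $g(t)$; once they hold, every term in the Theorem \ref{thm:upper} upper bound is Gaussian-maximized, and the remaining work is largely algebraic bookkeeping.

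First, I would start from the expression \eqref{eq:T3} in the proof of Theorem \ref{thm:upper}, which is valid for any choice of genie $(U_1^n, U_2^n)$ before any Gaussian specialization of the inputs. I would then adopt the correlated Gaussian genie $U_{ki} = cX_{ki} + \tilde{Z}_{ki}$ of Theorem \ref{thm:upper} and reuse the bounds already obtained there: the concavity arguments yielding the Gaussian upper bounds on $h(Y_k^n)/n$ and $h(Y_k^n\,|\,U_k^n, W_0)/n$; the identity $h(U_k^n\,|\,X_k^n, W_0) = \tfrac{n}{2}\log 2\pi e\,\mathrm{Var}(\tilde{Z}_k)$; and the EPI-based bound on the grouped expression
\[
h(U_k^n\,|\,W_0) - h(Y_j^n\,|\,X_j^n, U_j^n, W_0) - h(Y_j^n\,|\,X_j^n, W_0),
\]
whose tightness at the Gaussian point is exactly what the useful genie conditions enable.

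Next, I would verify algebraically that summing the resulting Gaussian expressions assembles into the claimed sum of mutual informations. The key identities are
\[
h(Y_{kG}) - h(Y_{kG}\,|\,X_{kG}, X_{0G}) = I(Y_{kG}; X_{kG}, X_{0G})
\]
and the chain-rule expansion
\[
I(X_{kG}; Y_{kG}, U_{kG}\,|\,X_{0G}) = I(X_{kG}; U_{kG}\,|\,X_{0G}) + I(X_{kG}; Y_{kG}\,|\,U_{kG}, X_{0G}),
\]
so that the six Gaussian-evaluated entropies $h(Y_k), h(Y_k\,|\,U_k, W_0), h(U_k\,|\,W_0), h(Y_k\,|\,X_k, W_0), h(Y_k\,|\,X_k, U_k, W_0), h(U_k\,|\,X_k, W_0)$ for $k = 1, 2$ combine into exactly $I(X_{kG}; Y_{kG}, U_{kG}\,|\,X_{0G}) + I(Y_{kG}; X_{kG}, X_{0G})$ summed over $k$.

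The main obstacle is the EPI step: after substituting the Gaussian upper bound for $t = h(cX_k^n + \tilde{Z}_k^n\,|\,W_0)/n$, one must still be in the monotone-increasing region of the function $g(t)$ defined in Theorem \ref{thm:upper}. This is exactly what conditions \eqref{eqn:temp2} and \eqref{eqn:temp4} guarantee, together with \eqref{eqn:temp1} and \eqref{eqn:temp3}, which provide the noise-variance ordering required to invoke EPI in the first place. Once those monotonicity and ordering facts are in hand, the rest of the argument is a routine matching of entropy expressions to Gaussian mutual informations under the covariance structure $\mathrm{Var}(X_{kG}\,|\,X_{0G}) = P_k$ and $\mathrm{Var}(X_{kG}) = P$.
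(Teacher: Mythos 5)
Your plan is correct and follows essentially the same route as the paper: the paper also observes that the right-hand side of \eqref{eqn:temp5} is exactly $\tfrac{1}{2}\bigl(I(X_1^n;Y_1^n,U_1^n|W_0)+I(Y_1^n;X_1^n,W_0)+I(X_2^n;Y_2^n,U_2^n|W_0)+I(Y_2^n;X_2^n,W_0)\bigr)$, then notes that each of the term-by-term upper bounds obtained in Theorem~\ref{thm:upper} (via concavity for $h(Y_k^n)$ and $h(Y_k^n|U_k^n,W_0)$, the exact identity for $h(U_k^n|X_k^n,W_0)$, and the EPI/monotonicity argument gated by \eqref{eqn:temp1}--\eqref{eqn:temp4} for the remaining grouped expression) equals the same expression with the arbitrary inputs replaced by the Gaussians $X_{kG}$, $U_{kG}$, $Y_{kG}$, which is precisely your ``routine matching'' step. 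The only cosmetic difference is that you propose starting from the fully expanded \eqref{eq:T3} and reassembling, while the paper starts directly from the already-compacted \eqref{eqn:temp5}; the content is the same.
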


\begin{proof}
In the proof of Theorem \ref{thm:upper}, \eqref{eqn:temp5} is
\begin{eqnarray*}
  n (R_0 + R_1 + R_2 - \epsilon) &\le& \frac{1}{2} (h(Y_1^n) + I(X_1^n;Y_1^n,U_1^n|W_0) - h(Y_1^n|X_1^n,W_0) \\
   & & + h(Y_2^n) + I(X_2^n;Y_2^n,U_2^n|W_0) - h(Y_2^n|X_2^n,W_0) ).
\end{eqnarray*}
The RHS is exactly
\begin{equation}
  \frac{1}{2}(I(X_1^n;Y_1^n,U_1^n|W_0) + I(Y_1^n;X_1^n,W_0) + I(X_2^n;Y_2^n,U_2^n|W_0) + I(Y_2^n;X_2^n,W_0)),
\end{equation}
and for each term we have derived an outer bound (c.f. \eqref{eq:T3},  \eqref{eqn:temp6}, \eqref{eqn:temp7},  \eqref{eqn:temp8}, \eqref{eqn:temp9}, \eqref{eqn:temp9}) in the proof of Theorem \ref{thm:upper}. It is easy to verify that the derived bound for each term can be obtained by replacing every term in the mutual information and entropy expressions by the corresponding Gaussian random variables. In this way, we can equivalently write the function $f(P_1,P_2,a_1^2,\text{Var}(\tilde{Z}_1),a_2^2,\text{Var}(\tilde{Z}_2))$ as
\begin{equation}
    \frac{1}{2} ( I(X_{1G};Y_{1G},U_{1G}|X_{0G}) + I(Y_{1G};X_{1G},X_{0G}) +   I(X_{2G};Y_{2G},U_{2G}|X_{0G}) + I(Y_{2G};X_{2G},X_{0G}) ).
\end{equation}
This completes the proof.
\end{proof}

As stated before, Lemma \ref{Outer} is just a restatement of the upper bound in Theorem \ref{thm:upper} in terms of mutual information among Gaussian random variables. The advantage of doing so is that we can easily compare the lower bound and outer bound, and study under what conditions they match.


The following lemma deals with the so-called smart genie condition defined  in \cite{Sreekanth09}. Under this condition, the genie-aided channel sum capacity is same as the one achieved by superposition coding and successive interference cancellation in the genie-free channel.

\begin{lemma}
Given fixed conditional variances $P_1 = P_2$, if there exist parameters $a$ and $b$ satisfying
\begin{equation}
    c(1 + c^2P_1) = ab
\end{equation}
and the useful genie conditions in Lemma \ref{Outer}, then the sum capacity $g(P_1,P_2)$ of the genie aided channel is same as the one achieved by superposition coding and successive interference cancellation in the genie-free channel.
\end{lemma}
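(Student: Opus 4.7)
The plan is to exploit Lemma \ref{Outer} to rewrite $g(P_1,P_2)$ as an explicit sum of mutual-information terms among the Gaussian surrogates, then to compare this expression term-by-term with the Gaussian form of the superposition-coding rate in Lemma \ref{Inner}, and finally to show that the gap between the two is exactly a pair of conditional mutual informations $I(X_{iG};U_{iG}\mid Y_{iG},X_{0G})$ that the smart-genie condition is designed to kill.

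First, I would chain-rule each Gaussian term from Lemma \ref{Outer} as
\begin{equation*}
I(X_{iG};Y_{iG},U_{iG}\mid X_{0G}) = I(X_{iG};Y_{iG}\mid X_{0G}) + I(X_{iG};U_{iG}\mid Y_{iG},X_{0G}).
\end{equation*}
Using $P_1=P_2$ and averaging the two (symmetric) Fourier--Motzkin expressions from Lemma \ref{Inner}, the superposition lower bound becomes $\tfrac{1}{2}[I(Y_{1G};X_{1G},X_{0G}) + I(X_{1G};Y_{1G}\mid X_{0G}) + I(Y_{2G};X_{2G},X_{0G}) + I(X_{2G};Y_{2G}\mid X_{0G})]$. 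Subtracting from the Gaussian expression for $g(P_1,P_2)$ gives
\begin{equation*}
g(P_1,P_2) - R_{\mathrm{LB}} = \tfrac{1}{2}\sum_{i=1}^{2} I(X_{iG};U_{iG}\mid Y_{iG},X_{0G}),
\end{equation*}
so it remains to prove each conditional mutual information is zero.

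Next, conditioning on $X_{0G}$ merely shifts $Y_{iG}$ and $U_{iG}$ by constants, so the question reduces to whether the privacy component $X_{iiG}$ is conditionally independent of $U_{iG}$ given $Y_{iG}$. Since the joint distribution is Gaussian, this is equivalent to the conditional covariance vanishing. Using the correlated-noise construction $Z_k = (a_k/b_k)\tilde{Z}_k + N_k$, a direct MMSE computation (with $Y' := X_{11G}+cX_{22G}+Z_1$ and $U' := cX_{11G}+\tilde{Z}_1$, noting $\mathbb{E}[\tilde Z_1 Z_1] = a_1 b_1$) yields
\begin{equation*}
\mathrm{Cov}(X_{11G},U_{1G}\mid Y_{1G},X_{0G}) = cP_1 - \frac{P_1(cP_1 + a_1 b_1)}{P_1 + c^2 P_2 + 1} = \frac{P_1\bigl(c(1+c^2P_2) - a_1 b_1\bigr)}{P_1+c^2P_2+1},
\end{equation*}
and symmetrically for user~$2$. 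With $P_1=P_2$ and the natural symmetric choice $a_1=a_2=a$, $b_1=b_2=b$, both conditional covariances vanish precisely when $c(1+c^2P_1)=ab$, which is the hypothesized smart-genie condition.

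The only real obstacle is the algebraic bookkeeping for the MMSE cross-covariance; there is no further information-theoretic content beyond reducing the problem to Gaussian conditional independence via the chain-rule decomposition. Conceptually, the smart-genie condition calibrates the correlation between $\tilde{Z}_k$ and $Z_k$ so that, once the receiver knows $Y_{iG}$ and the common-message codeword $X_{0G}$, the genie $U_{iG}$ provides no additional information about $X_{iG}$; consequently the outer bound collapses to the superposition rate.
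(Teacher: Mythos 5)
Your proof is correct and takes essentially the same route as the paper: both decompose $I(X_{iG};Y_{iG},U_{iG}\mid X_{0G})$ by the chain rule, identify the outer-minus-inner gap as $\tfrac12\sum_i I(X_{iG};U_{iG}\mid Y_{iG},X_{0G})$, center out $X_{0G}$ to reduce to the private components $X_{iiG}$, and show the Gaussian conditional mutual information vanishes via a covariance condition. The only cosmetic difference is that the paper states the vanishing as a Gaussian Markov-chain criterion $\mathrm{Cov}(X,Z)=\mathrm{Cov}(X,Y)\mathrm{Cov}(Y)^{-1}\mathrm{Cov}(Y,Z)$, whereas you compute the conditional (MMSE) covariance directly and set it to zero---the same algebra yielding $c(1+c^2P_1)=ab$.
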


\begin{proof}
We emphasize that here the inner bound and outer bound are bounds for specific given $P_1$ and $P_2$, where $0 \le P_1 = P_2 \le P$.

By Lemma \ref{Outer} and Lemma \ref{Inner}, given $P_1 = P_2$, the gap between outer bound and inner bound is
\begin{equation}
    \frac{1}{2} ( I(X_{1G};U_{1G}|X_{0G},Y_{1G}) + I(X_{2G};U_{2G}|X_{0G},Y_{2G})).
\end{equation}
If the gap is zero, i.e., outer bound and inner bound match, each term must be zero since mutual information is nonnegative.
Indeed, by symmetry we have
\begin{equation}
    I(X_{1G};U_{1G}|X_{0G},Y_{1G}) =  I(X_{2G};U_{2G}|X_{0G},Y_{2G}).
\end{equation}

Recall that
\begin{eqnarray}
  X_{1G} &:=& X_{11G} + \sqrt{P - P_1} X_{0G} , \\
  X_{2G} &:=& X_{22G} + \sqrt{P - P_2} X_{0G} ,
\end{eqnarray}
where $X_{11G}, X_{22G}$ and $X_{0G}$ are independent zero mean Gaussian random variables with variance $P_1, P_2$ and $1$, respectively.

By defining
\begin{equation}
    \tilde{X}_{kG} := X_{kG} - E[X_{kG}|X_{0G}] = X_{kkG},
\end{equation}
we have
\begin{eqnarray*}
   & & I(X_{1G};U_{1G}|X_{0G},Y_{1G}) = 0 \\
  &\Leftrightarrow& I(X_{1G};cX_{1G}+\tilde{Z}_1|X_{0G},X_{1G}+cX_{2G}+Z_1) = 0 \\
  &\Leftrightarrow& I(\tilde{X}_{1G};c\tilde{X}_{1G}+\tilde{Z}_1|X_{0G},\tilde{X}_{1G}+c\tilde{X}_{2G}+Z_1) = 0 \\
  &\Leftrightarrow& I(\tilde{X}_{1G};c\tilde{X}_{1G}+\tilde{Z}_1|\tilde{X}_{1G}+c\tilde{X}_{2G}+Z_1) = 0 ,
\end{eqnarray*}
where in the last step we use the fact that $X_{0G}$ is independent of $\tilde{X}_{kG}$, i.e., $X_{0G}$ is independent of  $X_{kkG}$.

The last condition is equivalent to the Markov Chain condition:
\begin{equation}
    \tilde{X}_{1G} \rightarrow \tilde{X}_{1G} + c\tilde{X}_{2G} + Z_1 \rightarrow c\tilde{X}_{1G} + \tilde{Z}_1.
\end{equation}

Since all the random variables are Gaussian, by the fact that Gaussian random variables $X \rightarrow Y \rightarrow Z$ if and only if
\begin{equation}
    \text{Cov}(X,Z) = \text{Cov}(X,Y)  \text{Cov}(Y)^{-1}  \text{Cov}(Y,Z),
\end{equation}
we get the smart genie condition
\begin{equation}
    c(1+c^2P_1) = ab.
\end{equation}

\end{proof}

So far, we have shown that given $P_1 = P_2$, under what conditions inner bound and outer bound match. The next step is to show for all $P_1$ and $P_2$, where $ 0 \le P_1 = P_2 \le P$, there exist parameters $a(P_1)$ and $b(P_1)$ satisfying both useful genie and smart genie conditions, and this will conclude our proof.

More specifically, we want to show that in some low interference regime, for any $P_1 \le P$, there exist nonnegative parameters $a$ and $b$ (we emphasize here $a$ and $b$ can be a function of $P_1$) satisfying the following conditions
 \begin{eqnarray}
        c(1+c^2P_1) &=& a b   \\
        c^2P_1 &\le& \sqrt{(1-a^2-b^2)(1-b^2)} - b^2 \\
        a^2 + b^2 &\le& 1 .
 \end{eqnarray}

It is easy to see that we only need to consider the case that $P_1 = P$ for the above constraints, since if there exist $a$ and $b$ satisfying the conditions for the case $P_1 = P$, which are exactly equations (\ref{eq:1}), (\ref{eq:2}) and (\ref{eq:3}),            then it has solutions for all $P_1 \le P$ (as $P_1$ decreases, we can fix $a$ and decrease the value of $b$ to satisfy all the constraints).

This completes the proof of Theorem \ref{thm:upper}.

\section{Optima Common Message Rate in the Low Interference Regime}\label{sec:zerocommon}
In this section, we show that  in the low interference regime defined in \eqref{eq:4} and \eqref{eq:5}, the optimal power allocation is to set $P_0$ to be zero to achieve sum capacity.

\begin{lemma}\label{pzero}
If
\begin{equation}\label{mono}
    (c^4 + 2c^3 + c^2)P + c^2 + 2c - 1 \le 0,
\end{equation}
then the sum rate (\ref{eq:inner})
\begin{equation}
 R(P_0) := \frac{1}{2} (\log (1 + \frac{(P-P_0) + (1+c)^2P_0}{c^2(P-P_0) + 1}) + \log(1 + \frac{(P-P_0)}{c^2(P-P_0) + 1}))
\end{equation}
is a decreasing function of $P_0$ and thus is maximized by setting $P_0$ to be zero.
\end{lemma}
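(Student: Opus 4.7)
The plan is to compute $dR/dP_0$ explicitly and show that the hypothesis (\ref{mono}) is precisely the bound needed to force the derivative to be nonpositive on $[0,P]$. Write $R(P_0) = \tfrac{1}{2}\log(1+u(P_0)) + \tfrac{1}{2}\log(1+v(P_0))$ where
$$u(P_0) := \frac{(P-P_0) + (1+c)^2 P_0}{c^2(P-P_0)+1}, \qquad v(P_0) := \frac{P-P_0}{c^2(P-P_0)+1},$$
and set $D := c^2(P-P_0)+1 > 0$. Routine differentiation gives $u'(P_0) = (c^2(1+c)^2 P + c^2 + 2c)/D^2$ and $v'(P_0) = -1/D^2$, together with $1+u = \bigl[(1+c^2)P + 2cP_0 + 1\bigr]/D$ and $1+v = \bigl[(1+c^2)(P-P_0) + 1\bigr]/D$.

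Combining these four expressions in $2R'(P_0) = u'/(1+u) + v'/(1+v)$ and cancelling a common factor of $D$, the desired inequality $2R'(P_0) \le 0$ reduces to
$$K\cdot\bigl[(1+c^2)(P-P_0)+1\bigr] \;\le\; (1+c^2)P + 2cP_0 + 1,$$
where I have set $K := c^2(1+c)^2 P + c^2 + 2c$. Since $c^4+2c^3+c^2 = c^2(1+c)^2$, the hypothesis (\ref{mono}) factors as exactly $K \le 1$. Using this bound on the left-hand side, it then suffices to verify $(1+c^2)(P-P_0) + 1 \le (1+c^2)P + 2cP_0 + 1$, which collapses to $0 \le (1+c)^2 P_0$ and is immediate.

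There is no real obstacle here; the argument is a direct calculus check together with one algebraic factoring step. The only conceptual observation is that the stated condition (\ref{mono}) is precisely the $P_0 = 0$ instance of the inequality $2R'(P_0)\le 0$, so it is the sharpest hypothesis reachable by this monotonicity approach and the extra slack $(1+c)^2 P_0$ absorbs the gap for $P_0 > 0$.
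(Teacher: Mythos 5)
Your proof is correct, and it takes a cleaner route than the paper's. The paper reparametrizes $R(P_0)$ via $a := P + 1/c^2$, $b := P + (1+P)/c^2$, $d := 2/c$, $e := 1 + 1/c^2$, writes $R'(P_0)$ as a rational function with three poles $p_1 < 0 < P < p_3 < p_2$ and one zero $z_1$, and argues that $z_1 \le 0$ (which it identifies with condition (\ref{mono})) forces $R'$ to be negative on $[0,P]$. That sign-analysis step ("if $c$ is sufficiently small, then $z_1$ is negative") is stated somewhat loosely in the paper and requires tracking signs through a fairly opaque algebraic expression for $z_1$. Your approach avoids all of that: you compute $u', v', 1+u, 1+v$ directly, reduce $2R'(P_0)\le 0$ to $K\bigl[(1+c^2)(P-P_0)+1\bigr] \le (1+c^2)P + 2cP_0 + 1$ with $K := c^2(1+c)^2P + c^2 + 2c$, observe that (\ref{mono}) is exactly $K\le 1$, and then the remaining slack is the manifestly nonnegative $(1+c)^2P_0$. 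The factorisation $c^4+2c^3+c^2 = c^2(1+c)^2$ and the recognition that (\ref{mono}) is the $P_0=0$ instance of the derivative inequality are the two small insights, and both are verified correctly. Your argument also makes it transparent why (\ref{mono}) is the sharp sufficient condition obtainable this way, whereas the paper's pole/zero bookkeeping obscures that. One minor remark: your chain gives $KB\le B\le A$ and hence $R'\le 0$; at $P_0=0$ with $K=1$ this is an equality, which is consistent with the Lemma's conclusion (maximum at $P_0=0$).
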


\begin{proof}
To simplify the notation, define
\begin{eqnarray}
  a &:=& P + \frac{1}{c^2} \\
  b &:=& P + \frac{1+P}{c^2} \\
  d &:=& \frac{2}{c} \\
  e &=& 1 + \frac{1}{c^2}.
\end{eqnarray}

So
\begin{equation}
  R(P_0) =  \frac{1}{2} (\log (\frac{b + dP_0}{a - P_0}) + \log\frac{(b-eP_0)}{a-P_0}) ) ,
\end{equation}
and the derivative of $R(P_0)$ is
\begin{equation}
    \frac{dR}{dP_0} = \frac{1}{2} \frac{(bd-2aed-be)P_0 + abd + 2b^2 - abe}{(dP_0 + b)(P_0-a)(eP_0-b)}.
\end{equation}

In total there are three poles and one zero, which are
\begin{eqnarray*}
  p_1 &=& -\frac{b}{d} = - (\frac{c P}{2} + \frac{1 + P}{2c}) <0 , \\
  p_2 &=& a = P + \frac{1}{c^2} >0 ,\\
  p_3 &=& \frac{b}{e} = P + \frac{1}{1 + c^2}  <p_2 \\
  z_1 &=& \frac{(P+\frac{1+P}{c^2})[(1+\frac{2}{c}+\frac{1}{c^2})P + \frac{1}{c^2} +\frac{2}{c^3}] - \frac{1}{c^4}}{ (1+\frac{2}{c}+\frac{2}{c^2}+\frac{2}{c^3}+\frac{1}{c^4})P + (\frac{1}{c^2}+\frac{2}{c^3}+\frac{1}{c^4}+\frac{4}{c^5})  }.
\end{eqnarray*}
It is easy to see
\begin{equation}
    p_1 < 0 < P < p_3 < p_2.
\end{equation}
Now we only need to consider the value of $z_1$. If $c$ is sufficiently small, then $z_1$ is negative, and thus $\frac{dR}{dP_0}$ is negative on $[\max\{p_1,z_1\}, t_3]$, so $R(P_0)$ is a  monotonically decreasing function on $[0, P]$, since $p_1,z_1 <0$ and $t_3 > P$.
The condition of $z_1 \le0$ is exactly the inequality (\ref{mono}).
\end{proof}

Lastly, we prove that the conditions (\ref{eq:4}) and (\ref{eq:5}) imply (\ref{mono}).

Denote by $\Gamma_A$ the region of $(c, P)$ determined by (\ref{eq:4}) and (\ref{eq:5}), and denote by $\Gamma_B$ the region of $(c, P)$ determined by (\ref{mono}).

\begin{theorem}\label{equal}
$\Gamma_A \subset \Gamma_B$.  Thus, when \eqref{eq:4} and \eqref{eq:5} hold, the optimal rate for the common message is zero to achieve the sum capacity.
\end{theorem}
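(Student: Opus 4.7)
The plan is to make the change of variables $u=c$, $v=c^2P$. Condition~\eqref{eq:4} becomes
\[
v^2 + (4v+3)\,u^2(1+v)^2 \le \tfrac12,
\]
and condition~\eqref{mono} becomes $(1+u)^2(1+v) \le 2$. Setting $m := u(1+v) = c(1+c^2P)$, condition~\eqref{eq:4} is exactly the sharp bound $m^2 \le (1-2v^2)/(2(4v+3))$. I would first note that this bound is automatically $\le 1/4$ (the resulting inequality is equivalent to the trivial $2(2v+1)^2\ge 0$), so condition~\eqref{eq:4} already implies condition~\eqref{eq:5}; the only binding constraint is \eqref{eq:4}.

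Next, a direct expansion using $u^2(1+v) = m^2/(1+v)$ and completing the square in $m$ yields
\[
(1+u)^2(1+v) \le 2 \;\iff\; \bigl(m + 1 + v\bigr)^2 \le 2(1+v) \;\iff\; m \le \sqrt{2(1+v)}-(1+v).
\]
Hence it suffices to prove the one-variable inequality
\[
\sqrt{\frac{1-2v^2}{2(4v+3)}} \le \sqrt{2(1+v)}-(1+v) \qquad \text{for } v\in[0,1/\sqrt{2}].
\]

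Squaring both sides (both are nonnegative on the stated range) and substituting $w = \sqrt{2(1+v)}$ (so $1+v = w^2/2$ and $4v+3 = 2w^2-1$), after clearing denominators the claim reduces to the polynomial statement
\[
p(w) := w^6 - 4w^5 + 4w^4 + 2w^3 - 4w^2 + 1 \;\ge\; 0.
\]
The one genuinely nontrivial step — and the main obstacle one must spot — is the perfect-square factorization
\[
p(w) = (w-1)^2\,(w^2 - w - 1)^2,
\]
from which nonnegativity is immediate. Noticing $p(1)=p'(1)=0$ extracts the factor $(w-1)^2$; dividing out reveals the remaining quartic to be $(w^2-w-1)^2$. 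Pleasingly, equality is attained at $w = (1+\sqrt5)/2$, the golden ratio, so the containment $\Gamma_A\subset\Gamma_B$ is in fact tight at an interior point of the common boundary, corresponding to $v = (\sqrt5-1)/4$.
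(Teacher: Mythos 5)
Your proof is correct, and it takes a genuinely different route from the paper's. The paper works directly in the original variables $(c,P)$: it observes that the left sides of \eqref{eq:4}, \eqref{eq:5}, \eqref{mono} are increasing in $P$, solves \eqref{mono} to get the boundary value $P_B = (1-2c-c^2)/(c^4+2c^3+c^2)$, substitutes $P_B$ into the left side of \eqref{eq:4}, and after a plain polynomial expansion lands on the identity $(c-1)^2(c^2+4c-1)^2\ge 0$ (together with an easy check that $c(1+c^2P_B)\le 1/2$ reduces to $(c-1)^2\ge 0$). Your change of variables $(u,v)=(c,c^2P)$, $m=u(1+v)$, followed by $w=\sqrt{2(1+v)}$, buys three things the paper's route does not: (i) it exposes the fact that \eqref{eq:4} already implies \eqref{eq:5} (the bound $m^2\le(1-2v^2)/(2(4v+3))$ is at most $1/4$ since $2(2v+1)^2\ge 0$), so \eqref{eq:5} is redundant — the paper instead verifies \eqref{eq:5} at $P_B$ as a separate step; (ii) it reduces the two-parameter containment to a genuine one-variable inequality in $w$, avoiding the substitution of a rational function $P_B(c)$; and (iii) it identifies the tangency point cleanly at $w=(1+\sqrt5)/2$, i.e.\ $v=(\sqrt5-1)/4$, whereas the paper's factorization locates it implicitly at the root of $c^2+4c-1=0$ (the other factor $(c-1)^2$ corresponds to $P_B<0$ and is inert). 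Both proofs hinge, in the end, on finding a perfect-square factorization of a degree-six polynomial — $(c-1)^2(c^2+4c-1)^2$ in the paper, $(w-1)^2(w^2-w-1)^2$ for you — and these encode the same geometric tangency between $\partial\Gamma_A$ and $\partial\Gamma_B$.
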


\begin{proof}
Note that LHS of (\ref{eq:4}), (\ref{eq:5}) and (\ref{mono}) are  increasing functions of $P$. Fix $c$, and let $(c,P_A)$ and $(c, P_B)$ be the points on the boundary of $\Gamma_A$ and $\Gamma_B$, respectively. To show $\Gamma_A \subset \Gamma_B$, it is sufficient to show $ P_A \le P_B $. Equivalently, it is sufficient to show either
\begin{equation}
    c^4 P_B^2 + (4c^2P_B+3)c^2(1+c^2P_B)^2 \ge \frac{1}{2},
\end{equation}
or
\begin{equation}
    c(1+c^2P_B) \ge \frac{1}{2}.
\end{equation}

Indeed, we will prove
\begin{eqnarray}
  c^4 P_B^2 + (4c^2P_B+3)c^2(1+c^2P_B)^2 &\ge& \frac{1}{2}, \\
  c(1+c^2P_B) &\le& \frac{1}{2}.
\end{eqnarray}

First from (\ref{mono}) we get
\begin{equation}
    P_B = \frac{1 - 2c - c^2}{c^4 + 2c^3 + c^2}.
\end{equation}
Therefore,
\begin{eqnarray}
   & & c(1+c^2P_B) \le \frac{1}{2} \\
   &\Leftrightarrow& c(1 + \frac{1 - 2c - c^2}{c^2 + 2c +1}) \le \frac{1}{2} \\
   &\Leftrightarrow& \frac{2c}{c^2 + 2c +1}  \le \frac{1}{2}\\
   &\Leftrightarrow& 4c \le c^2 + 2c +1 \\
   &\Leftrightarrow& 0 \le c^2 - 2c +1 \\
   &\Leftrightarrow& 0 \le (c-1)^2.
\end{eqnarray}
The last step holds obviously, so $c(1+c^2P_B) \le \frac{1}{2}$.

Next we prove $c^4 P_B^2 + (4c^2P_B+3)c^2(1+c^2P_B)^2 \ge \frac{1}{2}$.
\begin{eqnarray}
   & & c^4 P_B^2 + (4c^2P_B+3)c^2(1+c^2P_B)^2 \ge \frac{1}{2} \\
   &\Leftrightarrow&  (\frac{1 - 2c - c^2}{c^2 + 2c +1})^2 + (4\frac{1 - 2c - c^2}{c^2 + 2c +1} + 3)c^2(1 + \frac{1 - 2c - c^2}{c^2 + 2c +1})^2  \ge \frac{1}{2}\\
   &\Leftrightarrow& (\frac{1 - 2c - c^2}{c^2 + 2c +1})^2 + \frac{-c^2 - 2c + 7}{c^2 + 2c +1}\frac{4c^2}{(c^2 + 2c +1)^2} \ge \frac{1}{2}\\
   &\Leftrightarrow& (1 - 2c - c^2)^2 (c^2 + 2c + 1) + 4c^2(-c^2 -2c + 7) \ge \frac{1}{2}(c+1)^6 \\
   &\Leftrightarrow&  (1 + 2c^2 + c^4 -4c + 4c^3)(c^2 + 2c +1) - 4c^4 - 8c^3 + 28c^2 \ge \frac{1}{2}(c+1)^6\\
   &\Leftrightarrow&  (-5c^2 + 11c^4 + c^6 + 4c^3 +6c^5 -2c +1) - 4c^4 -8c^3 + 28c^2  \ge \frac{1}{2}(c+1)^6\\
   &\Leftrightarrow&  c^6 + 6c^5 + 7c^4 - 4c^3 + 23c^2 - 2c + 1 \\
   && \ge \frac{1}{2}(c^6 + 6c^5 + 15c^4 + 20c^3 + 15c^2 + 6c +1)\\
   &\Leftrightarrow& c^6 + 6c^5 - c^4 - 28c^3 + 31c^2 - 10c + 1 \ge 0  \\
   &\Leftrightarrow& (c - 1)^2 (c^2 + 4c - 1)^2 \ge 0.
\end{eqnarray}
Since the last step holds, we have $c^4 P_B^2 + (4c^2P_B+3)c^2(1+c^2P_B)^2 \ge \frac{1}{2}$.
This completes the proof of Theorem \ref{equal}.
\end{proof}

\section{Conclusion}\label{sec:conclusion}
We consider symmetric two-user Gaussian interference channel with common messages. We derive an upper bound on the sum capacity, and show that the upper bound is tight in the low interference regime, where the optimal transmission scheme is to send no common messages and each receiver treats interference as noise. Our result shows that although the availability of common messages provides a  cooperation opportunity for transmitters, in the low interference regime the presence of common messages does not help increase the sum capacity.

\section*{Acknowledgment}\label{sec:ackowledgementd}

We thank Prof. Pramod Viswanath and Dr. Sreekanth Annapureddy for the helpful discussions.

Research of Quan Geng was supported in part by Prof. Pramod Viswanath's National Science Foundation grant No. CCF-1017430.

\bibliographystyle{IEEEtran}

\bibliography{reference}

\end{document}